\PassOptionsToPackage{dvipsnames}{xcolor}
\documentclass[aps,pra,11pt,tightenlines ,preprintnumbers,nofootinbib,superscriptaddress,longbibliography,twocolumn,letterpaper]{revtex4-2}

\usepackage[T1]{fontenc}
\usepackage{tabularx} %
\usepackage{amsmath}  %
\usepackage[dvipsnames]{xcolor}
\usepackage{graphicx} %
\usepackage[normalem]{ulem}
\usepackage[reset, top=0.75in, bottom=0.75in, inner=0.6in, outer=0.6in, letterpaper, columnsep=0.5in]{geometry} %
\usepackage{newtxtext}
\usepackage{physics}
\usepackage{dsfont}
\usepackage{tikz}
\usepackage{amssymb }
\usepackage[final]{hyperref} %
\usepackage{amsmath, amsthm, amssymb}
\usepackage[dvipsnames]{xcolor}
\usepackage{tikz}
\usepackage{graphicx}
\usepackage{subfigure}
\usepackage{afterpage}
\usepackage{capt-of}
\usepackage{dblfloatfix}
\usepackage{mathtools}
\usepackage[autostyle=true]{csquotes}

\usepackage{etoolbox}
\colorlet{Mycolor1}{green!10!orange}
\newtheorem{prop}{Proposition}
\hypersetup{
	colorlinks=true,       %
	linkcolor=magenta,        %
	citecolor=blue,        %
	filecolor=magenta,     %
	urlcolor=blue
}

\newcommand{\comment}[1]{}
\usepackage{paralist} %
\usepackage{ulem}

\usepackage[babel]{microtype} 
\microtypecontext{spacing=nonfrench}
\microtypesetup{
expansion={true,nocompatibility},
protrusion={true,nocompatibility},
activate={true,nocompatibility},
tracking=false,
kerning=true,
spacing={true}
}

\usepackage[all=normal,floats=tight,mathspacing=tight,wordspacing=tight,paragraphs=normal,tracking=tight,charwidths=tight,mathdisplays=normal,sections=normal,margins=normal]{savetrees}

\makeatletter
\renewcommand\@makefnmark{\hbox{\@textsuperscript{\normalfont\color{Green}\@thefnmark}}}
\renewcommand\and{,\penalty-1\ }  %
\makeatother

\theoremstyle{plain}
\newtheorem{theorem}[prop]{Theorem}
\newtheorem{lemma}[prop]{Lemma}
\newtheorem{cor}[prop]{Corollary}
\begin{document}
\newcommand{\maria}[1]{{\color{blue} #1}}
\newcommand{\dani}[1]{{\color{red} #1}}
\newcommand{\yasamin}[1]{{\color{orange} #1}}
\newcommand{\soham}[1]{{\color{green}
#1}}

\title{\vspace*{0.001in} 
Upper Bounding Hilbert Space Dimensions which can Realize all the Quantum Correlations
\vskip 0.1in}
\author{Yasamin Panahi}
\affiliation{Perimeter Institute for Theoretical Physics, Waterloo, Ontario, Canada, N2L 2Y5}
\affiliation{Department of Physics and Astronomy, University of Waterloo, Waterloo, Ontario, Canada, N2L 3G1}
\author{Maria Ciudad Alañón}
\affiliation{Perimeter Institute for Theoretical Physics, Waterloo, Ontario, Canada, N2L 2Y5}
\affiliation{Department of Physics and Astronomy, University of Waterloo, Waterloo, Ontario, Canada, N2L 3G1}
\author{Daniel Centeno}
\affiliation{Perimeter Institute for Theoretical Physics, Waterloo, Ontario, Canada, N2L 2Y5}
\affiliation{Department of Physics and Astronomy, University of Waterloo, Waterloo, Ontario, Canada, N2L 3G1}
\author{Ralph Jason Costales}
\affiliation{Perimeter Institute for Theoretical Physics, Waterloo, Ontario, Canada, N2L 2Y5}
\affiliation{Department of Physics and Astronomy, University of Waterloo, Waterloo, Ontario, Canada, N2L 3G1}
\affiliation{Department of Physics and Astronomy, University College London, London, United Kingdom, WC1E 6BT}
\affiliation{London Centre for Nanotechnology, London, United Kingdom, WC1H 0AH}
\author{Luca Mrini}
\affiliation{Perimeter Institute for Theoretical Physics, Waterloo, Ontario, Canada, N2L 2Y5}
\affiliation{Department of Physics and Astronomy, University of Waterloo, Waterloo, Ontario, Canada, N2L 3G1}
\affiliation{Department of Mathematics, University of Vienna, Strudlhofg. 4 A-1090 Wien, Austria}
\author{Soham Bhattacharyya}
\affiliation{Perimeter Institute for Theoretical Physics, Waterloo, Ontario, Canada, N2L 2Y5}
\affiliation{School of Physical Sciences, Indian Association for the Cultivation of Science, Kolkata-700032, India}
\author{Elie Wolfe\textsuperscript{*}}
\affiliation{Perimeter Institute for Theoretical Physics, Waterloo, Ontario, Canada, N2L 2Y5}
\affiliation{Department of Physics and Astronomy, University of Waterloo, Waterloo, Ontario, Canada, N2L 3G1}

\begin{abstract}
We introduce novel upper bounds on the Hilbert space dimensions required to realize quantum correlations in Bell scenarios. We start by considering bipartite cases wherein one of the two parties has two settings and two outcomes. Regardless of the number of measurements and outcomes of the other party, the Hilbert space dimension of the first party can be limited to two while still achieving all convexly extremal quantum correlations. We then leverage Schmidt decomposition to show that the remaining party can losslessly also be restricted to a qubit Hilbert space. We then extend this idea to multipartite scenarios. We also adapt our results to provide upper bounds of local Hilbert space dimensions to achieve \emph{any} quantum correlation, including convexly non-extremal correlations, by utilizing Carathéodory’s theorem. Finally, we generalize our results to nonstandard Bell scenarios with communication. Taken together, our results fill in several previously unresolved aspects of the problem of determining sufficient Hilbert space dimensionality, expanding the collection of scenarios for which finite-dimensional quantum systems are known to be sufficient to reproduce any quantum correlation.
\end{abstract}
\maketitle

\begingroup
\renewcommand\thefootnote{}\footnotetext{
\textsuperscript{*}{ewolfe@perimeterinstitute.ca}
}
\addtocounter{footnote}{-1}
\endgroup
\section{Introduction}\label{s1}
In 1964, John Stewart Bell's seminal paper \enquote*{On the Einstein-Podolsky-Rosen Paradox} claimed that quantum mechanics cannot be reconciled with local hidden-variable theories~\cite{bell1964einstein}. According to Bell's theorem, there exist inequality constraints - Bell inequalities - which, if violated, no local hidden variable model can adequately describe the experiment and a non-classical explanation is needed to account for the observed statistics. In other words, Bell's theorem says that the set of correlations which are predicted as possible by quantum theory, $\mathcal{Q}$, is strictly larger than the set of correlations predicted as possible by local hidden variable models, $\mathcal{L}$.

Bell scenarios~\cite{loubenets2011upper, Sikora_2019, jebarathinam2019maximal,pal2009quantum} are important from a fundamental physics perspective as they provide a framework where it is possible to characterize the classical versus non-classical distinction. An $N$-partite Bell scenario (Fig.\hyperref[f1]{1}) describes an experiment where a quantum system prepared in a certain initial state is distributed among $N$ space-like separated parties. Then, the $N$ parties perform a local measurement, which depends on a local setting, on their respective shares. The number of different settings and outcomes of the measurements might be different for each player. If the $i^\text{th}$ party has a setting ${j_i \in \{1,...,m_i\}}$ with a corresponding outcome $o_i \in \{1,...,k_{j_i}\}$, we specify this scenario with the notation of  Rosset et al.~\cite{rosset2014classifying}: %
\begin{equation}\label{1}
        [(k_{11}, ..., k_{1m_1}), (k_{21}, ..., k_{2m_2}), ..., (k_{N1}, ..., k_{Nm_N})]. \nonumber
\end{equation}

In these scenarios, when the number of settings and outcomes is finite, the correlations that can be explained using a local hidden variable model form a polytope, known as the local polytope~\cite{Geometry_2018}. Mathematically, every correlation inside the local polytope can be written as, 
\begin{equation}\label{2}
    P(a_1,...,a_N|x_1,...,x_N) = \sum_{\lambda}g(\lambda)\prod_{i=1}^N P_i(a_i|x_i \lambda)
\end{equation}
where $(a_1,\dots,a_N)$ and $(x_1,\dots,x_N)$ are the outputs and measurement settings respectively and $g(\lambda)$ is a distribution over the hidden variable $\lambda$.
The set of quantum correlations, $\mathcal{Q}$, is also a convex set which includes the local polytope. In quantum theory, correlations are given by the Born rule
\begin{equation}\label{3}
    P(a_1,...,a_N|x_1,...,x_N) = \Tr[\rho\hspace{1mm}\bigotimes_{i=1}^N \hat{A_i} (a_i|x_i)],
\end{equation}
where $\rho$ is the density operator for the $N$-partite quantum state and $\hat{A}_i(a_i|x_i)$ is the Positive Operator-Valued Measure (POVM) for the $i^{th}$ party.
If the number of settings and outcomes are restricted to two, it is proven that a local Hilbert space dimension (HSD) of 2, i.e. a qubit, is sufficient for each party to realize all \emph{convexly extremal points} \footnote{A convexly extremal point is defined as a point in the set of quantum correlations that cannot be realized as a convex combination of other points in the set.}of the set of quantum correlations~\cite{masanes2005extremal,masanes2006asymptotic,Pironio_2009,Irfan_2020}. This baseline understanding sets the groundwork for investigating more complex scenarios. The correlation capabilities of qubits, while adequate for the [$\underbrace{(2,2),(2,2),...}_{\text{\textit{N} times}}$] scenario, may no longer suffice when additional degrees of freedom are introduced in the number of settings and outcomes.\\
\begin{figure}[h!]\label{f1}
    \centering
    \includegraphics[width=0.8\linewidth]{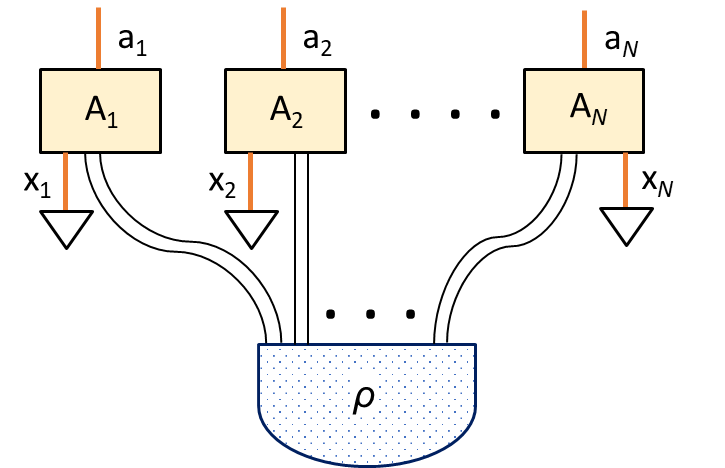}
    \caption{Circuit interpretation for the $N$-partite Bell scenario: A quantum state $\rho$ is distributed to $N$ parties with settings $x_1,x_2,...,x_N$ and outcomes $a_1,a_2,...,a_N$.}
    \label{fig:multipartiteBellcircuit}
\end{figure}

A straightforward question is whether finite HSDs are always sufficient to achieve the convexly extremal correlations in a scenario involving a finite number of settings and outcomes. The answer turns out to be negative, as evidenced by examples in the literature that require infinite HSDs for maximal violation of tight Bell inequalities.\footnote{It has also been shown that the set of quantum correlations for specific graph structures is not closed, indicating that these extremal quantum correlations, which lie outside the closure, can only be achieved using infinite-dimensional systems~\cite{dykema2019non}.} Ref.~\cite{pal2010maximal} provided a protocol using infinite HSDs to achieve the maximal quantum violation of the $I_{3322}$ Bell inequality -- corresponding to the $[(2,2,2),(2,2,2)]$ scenario, first presented in~\cite{Collins2004} -- and conjectured that it cannot be attained using a quantum state of finite dimension. Further evidence can be found in Ref.~\cite{coladangelo2020inherently} which considers a scenario $[(3,3,3,3,3),(3,3,3,3)]$ where two parties have five and four settings respectively, with three outcomes per setting, and identifies a concrete correlation that cannot be achieved by any finite HSDs. 

For a correlation that is \emph{classically} achievable, we can always provide an upper bound on the shared randomness required to achieve it~\cite{nonconvex}. As said above, this is not true in the quantum paradigm because there are examples of correlations which need unbounded HSDs to achieve them. This raises natural questions such as: for which scenarios are finite HSDs sufficient? Where exactly does one draw the line? 

In some scenarios there are \emph{specific} Bell inequalities known to be maximally violable using finite HSDs. Ref.~\cite{pal2009quantum} finds more than two hundred tight Bell inequalities (bipartite, up to five settings, always two outcomes) are maximally violated using local HSDs not exceeding six. However, those results do not demonstrate that finite HSDs are sufficient for other Bell inequalities, even within the same scenario. 

Nevertheless, there are a precious few scenarios for which we know that finite HSDs are sufficient to achieve the convexly extremal points of their quantum set. The most significant case is the aforementioned scenario wherein $N$ players perform two dichotomic measurements; there, qubits are enough~\cite{masanes2005extremal,Pironio_2009,Irfan_2020,masanes2006asymptotic} as we will review in Sec.~\ref{s2.a}. If all convexly extremal quantum correlations in a given scenario can be realized by some finite HSDs, then \emph{all} quantum correlations in the scenario -- extremal and nonextremal both -- can also be realized by some different (higher) finite HSDs, as discussed in Sec.~\ref{s2.d}.

To recap: Some Bell scenarios realize all quantum correlations with known finite HSDs, other Bell scenarios require unbounded HSDs. For everything else, realizability by finite HSDs remains an open question. Here we contribute a small advancement to this open question, expanding the range of scenarios known to have the property of realizing all their quantum correlations with finite HSDs.\footnote{Hereafter, whenever we speak of correlations, we mean quantum correlations.} 

Note that this open question is highly nontrivial. There are many examples of scenarios for which \emph{lower bounds} on HSDs have been identified. Such lower bounds are often expressed as dimension witnesses, which follow the form of Bell inequalities. An instance is~\cite{PhysRevLett.98.010401} which demonstrates that the CGLMP inequalities~\cite{collins2002bell} for $k$ outcomes are not maximally violated by $d$-dimensional qudits for $d<k$ at least in the set $3\leq k \leq 8$. While it turns out that the $k$-outcome CGLMP inequality is maximally violated by $k$-dimensional Hilbert spaces for each party, it is unknown if \emph{every} Bell inequality for such $k$-outcome two-setting bipartite scenarios can also be maximally violated by $k$-dimensional Hilbert spaces. For the tripartite case, however, it turns out that $k$-dimensional Hilbert spaces are generally \emph{insufficient}, even when restricting to two settings per party. Ref.~\cite{PhysRevA.85.052113} identified a special facet-defining tripartite Bell inequality wherein each party performs measurements with two settings and three outcomes; they numerically proved that local HSDs of $\vec{d}{=}\{6,6,6\}$ provided significantly greater violation of the inequality relative to quantum strategies limited to $\vec{d}{=}\{3,3,3\}$. In other words: patterns have exceptions, and rigorous upper bounds of HSDs such as those in this work cannot be taken for granted.

The structure of the paper is as follows. In Sec.\ref{s2}, we review existing results that establish upper bounds on HSDs in standard Bell scenarios, and then present our main result, which extends these bounds to more Bell scenarios (i.e. scenarios with different number of parties, settings and outcomes). In Sec.\ref{s3}, we highlight the connection between standard Bell scenarios and those involving communication, which enables us to apply our bounds on HSDs to this broader class of scenarios. Finally, in Sec.~\ref{s4}, we conclude with a brief summary of our findings.

\section{Bell Scenarios}\label{s2}
\subsection{Existing Results}\label{s2.a}
In the bipartite scenario $[(2,2),(2,2)]$, it has been proven that qubits are sufficient to produce all convexly extremal correlations~\cite{masanes2005extremal,masanes2006asymptotic,Pironio_2009,Irfan_2020}. In fact, this result generalizes straightforwardly so that qubits are sufficient to achieve all convexly extremal correlations in the case with $N$ parties $[\underbrace{(2,2),(2,2),...}_{\text{\textit{N} times}}]$. This generalization is stated formally in a proposition due to Masanes, which we quote here.
\begin{prop} [\textit{Masanes~\cite{masanes2005extremal}, Theorem 4}]\label{pr1}
In the Bell scenario with N parties each having two dichotomic observables, all the convexly extremal quantum correlations are achievable by measuring $N$-qubit pure states with projective observables.
\end{prop}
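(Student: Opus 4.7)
\emph{Proof proposal.} The plan is to first reduce to pure states with projective observables, then apply Jordan's lemma independently to each party's pair of projections, decomposing every local Hilbert space into blocks of dimension at most two, and finally use convex extremality to force the global state onto a single such block per party.

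First, I would argue that extremality lets us assume the shared state is pure and each party's measurements are projective. If $\rho=\sum_k p_k \ket{\psi_k}\bra{\psi_k}$ were a proper mixture, the Born rule immediately writes $P(a_1,\dots,a_N|x_1,\dots,x_N)$ as a convex combination of the pure-state correlations associated with the $\ket{\psi_k}$; extremality forces all these correlations to coincide, and we may replace $\rho$ by any one of them. For non-projective POVMs I would apply Naimark's theorem \emph{locally} at each party, lifting $\hat A_i(a_i|x_i)$ to a projective measurement on $\mathcal{H}_i\otimes\mathcal{H}_i^{\text{anc}}$; this preserves the $N$-party tensor structure and the observed correlation at the temporary cost of enlarging local dimensions.

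Next, for each party $i$ the two dichotomic projective observables correspond to two orthogonal projections $P_i^{0},P_i^{1}$ on $\mathcal{H}_i$. Jordan's lemma furnishes an orthogonal decomposition $\mathcal{H}_i=\bigoplus_{\alpha_i}\mathcal{H}_i^{(\alpha_i)}$ into subspaces of dimension at most two, each invariant under both projections and hence under all measurement operators $\hat A_i(a_i|x_i)$. Taking tensor products over parties gives
\begin{equation}
\bigotimes_{i=1}^{N}\mathcal{H}_i \;=\; \bigoplus_{\vec\alpha}\;\bigotimes_{i=1}^{N}\mathcal{H}_i^{(\alpha_i)},
\end{equation}
and every product measurement operator $\bigotimes_{i=1}^{N}\hat A_i(a_i|x_i)$ is block-diagonal in this $\vec\alpha$-grading. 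Writing $\ket{\psi}=\sum_{\vec\alpha}c_{\vec\alpha}\ket{\psi^{(\vec\alpha)}}$ with each $\ket{\psi^{(\vec\alpha)}}$ normalized inside its block, block-diagonality annihilates cross terms and produces the explicit convex decomposition
\begin{equation}
P(a_1,\dots,a_N|x_1,\dots,x_N) \;=\; \sum_{\vec\alpha}|c_{\vec\alpha}|^{2}\,P^{(\vec\alpha)}(a_1,\dots,a_N|x_1,\dots,x_N).
\end{equation}
Extremality then forces all weight onto a single $\vec\alpha^{\star}$, so $\ket{\psi}$ lies inside a tensor product of at-most-two-dimensional subspaces, i.e.\ it is realized by an $N$-qubit pure state measured with projective observables, as claimed.

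The step I expect to be the main obstacle is the first reduction: keeping the $N$-partite tensor structure intact while simultaneously removing mixedness and POVM-ness without accidentally breaking extremality. Local Naimark dilations and local purifications are fine in principle, but one must verify that any convex decomposition of the dilated correlation pulls back to a decomposition of the original, so that extremality transfers faithfully between representations. Once that is secured, the Jordan-lemma step and the block-wise extremality argument are essentially bookkeeping, and the qubit bound falls out cleanly.
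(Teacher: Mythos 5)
This proposition is quoted from Masanes~\cite{masanes2005extremal} and the paper offers no proof of its own, but your argument---reduction to pure states and projective measurements, Jordan-lemma block-diagonalization of each party's two dichotomic observables into at-most-two-dimensional invariant subspaces, and extremality collapsing the resulting convex decomposition onto a single block per party---is essentially Masanes' original proof. Your worry about extremality transferring under local Naimark dilation is unfounded anyway, since extremality is a property of the correlation itself within the quantum set, not of any particular realization.
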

However, an alternative formulation can also be provided. It was presented as an intermediate result in the proof of the original theorem~\cite{masanes2005extremal,masanes2006asymptotic}. Specifically, Masanes' result explicitly demonstrates that in a Bell scenario, a single party with two settings and two outcomes can always be described by a qubit, regardless of the number of other parties or their respective number of settings and outcomes.
\begin{prop}\label{pr2}
If there is a convexly extremal quantum correlation in a $N$-partite Bell scenario where one of the parties (say Alice) has two settings-two outcomes, then that correlation can be achieved with Alice having a local Hilbert space of dimension 2.
\end{prop}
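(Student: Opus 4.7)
The plan is to leverage Jordan's lemma on Alice's two binary observables, obtaining a decomposition of her Hilbert space into subspaces of dimension at most two, and then to invoke convex extremality to isolate a single qubit block.

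First I would reduce to the case in which Alice's measurements are projective and the global state is pure. Each of her 2-outcome POVMs admits a Naimark dilation to a projective measurement on an enlarged Hilbert space $\mathcal{H}_A'=\mathcal{H}_A\otimes\mathcal{H}_{\text{anc}}$; choosing $\mathcal{H}_{\text{anc}}$ large enough to accommodate both of Alice's settings and adjoining it in a fixed pure state leaves the correlation unchanged. Purity of the global state is similarly free: a mixed realization $\rho=\sum_i q_i|\phi_i\rangle\langle\phi_i|$ decomposes the correlation as $P=\sum_i q_i P_i$, so convex extremality of $P$ already selects a pure summand $|\phi_i\rangle$ that reproduces it.

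With Alice's two settings now represented by projectors $P^{(0)}_{+},P^{(1)}_{+}$ on $\mathcal{H}_A'$, I would apply Jordan's lemma to obtain an orthogonal decomposition $\mathcal{H}_A'=\bigoplus_k\mathcal{H}_A^{(k)}$ with $\dim\mathcal{H}_A^{(k)}\in\{1,2\}$, on each of which both projectors restrict. Writing $\Pi_k$ for the projector onto block $k$ and $(\Pi_k\otimes I)|\psi\rangle=\sqrt{p_k}\,|\psi_k\rangle$ with $|\psi_k\rangle$ normalized and supported in $\mathcal{H}_A^{(k)}\otimes\mathcal{H}_{\bar A}$, I would verify
\begin{equation*}
P(a_1,\dots,a_N|x_1,\dots,x_N)=\sum_k p_k\,P_k(a_1,\dots,a_N|x_1,\dots,x_N),
\end{equation*}
where $P_k$ is the correlation generated by $|\psi_k\rangle$ together with the block-restricted measurements of Alice and the unchanged measurements of the other parties. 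The off-diagonal contributions in $|\psi\rangle\langle\psi|$ vanish because Alice's operators map $\mathcal{H}_A^{(k)}$ into itself and the other parties' operators act trivially on her factor.

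Convex extremality of $P$ then forces $P_k=P$ for every $k$ with $p_k>0$, and picking any such block yields a realization of $P$ with Alice's local Hilbert space of dimension at most two, padded if necessary to exactly two. The main obstacle I anticipate is bookkeeping across the Naimark and purification reductions, ensuring that Jordan's lemma is ultimately applied on a Hilbert space Alice genuinely owns and that the resulting qubit realization does not smuggle in external resources that silently enlarge her effective dimension; once both projectors act on a common local space, the Jordan-block decomposition and the extremality step are essentially routine.
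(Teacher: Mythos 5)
Your proposal is correct and follows essentially the same route as the result's source: the paper does not reprove Proposition~\ref{pr2} but extracts it from Masanes' argument, which is precisely this combination of reduction to projective measurements on a pure state (justified by dilation and convex extremality), simultaneous block-diagonalization of the two dichotomic observables into blocks of dimension at most two, and extremality forcing each block with nonzero weight to reproduce the full correlation. No substantive difference from the cited proof, so nothing further to compare.
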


It follows directly from this proposition that, whenever the local Hilbert space dimensions (HSDs) of the other parties are finite, the correlation can be realized with Alice having a local Hilbert space dimension of 2, while the remaining parties retain their respective finite HSDs.
\subsection{Bipartite Scenario}\label{s2.b}
Suppose we have a bipartite Bell scenario (Alice and Bob) where Bob has two dichotomic observables and Alice has an arbitrary number of settings and outcomes. Proposition~\ref{pr2} tells us that Bob's Hilbert space can be taken to be 2-dimensional to achieve the convexly extremal correlations. Notably, Alice's local Hilbert space dimension can also be restricted without loss of generality by virtue of Schmidt decomposition~\cite{Sikora_2016}. The Schmidt decomposition can be used to express a bipartite pure quantum state as one in a Hilbert space whose dimension is restricted to the minimum of the dimensions of the two parts of the state.
\begin{figure}[h!]
    \centering
    \includegraphics[width=0.4\linewidth]{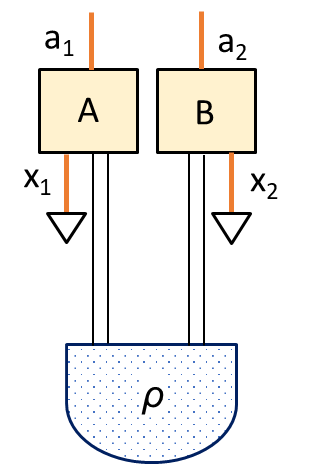}
    \caption{Circuit interpretation for bipartite Bell scenario.}
    \label{fig:bipartiteBellcircuit}
\end{figure}
\begin{lemma}[Schmidt Decomposition]\label{lem:schmidt}
Let $\ket{\psi}$ be a pure state in the tensor product Hilbert space $\mathcal{H}_A \otimes \mathcal{H}_B$, with local Hilbert spaces $\mathcal{H}_A$ {and} $\mathcal{H}_B${ of dimension} $d_A$ {and} $d_B$ respectively. Let $d'\coloneqq \min(d_A, d_B)$.  Then, there exist orthonormal sets  $\{ \ket{u_1},...,\ket{u_{d'}} \} \subset \mathcal{H}_A$ {and} $\{ \ket{v_1},...,\ket{v_{d'}} \} \subset \mathcal{H}_B$ {such that} $\ket{\psi} = \sum_{i=1}^{d'} \lambda_i \ket{u_i} \otimes \ket{v_i}$ {where the scalars} $\lambda_i$ {are real, strictly positive, and normalized such that} $\sum_i \lambda_i^2 = 1$.
\end{lemma}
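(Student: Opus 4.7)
The plan is to reduce the claim to the singular value decomposition (SVD) of a matrix. First I would fix arbitrary orthonormal bases $\{\ket{a_i}\}_{i=1}^{d_A}$ of $\mathcal{H}_A$ and $\{\ket{b_j}\}_{j=1}^{d_B}$ of $\mathcal{H}_B$ and write
\begin{equation*}
\ket{\psi} = \sum_{i=1}^{d_A} \sum_{j=1}^{d_B} C_{ij}\, \ket{a_i} \otimes \ket{b_j},
\end{equation*}
so that all the information about $\ket{\psi}$ is packaged in the $d_A \times d_B$ complex matrix $C=(C_{ij})$.

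Next I would invoke SVD on $C$: there exist unitary matrices $U \in \mathbb{C}^{d_A\times d_A}$ and $V \in \mathbb{C}^{d_B\times d_B}$, and a rectangular ``diagonal'' matrix $\Sigma$ whose entries $\Sigma_{kk}=\lambda_k \ge 0$ (for $1 \le k \le d'$) are the singular values and whose off-diagonal entries vanish, such that $C = U\,\Sigma\,V^\dagger$. Substituting $C_{ij}=\sum_{k=1}^{d'} U_{ik}\,\lambda_k\,(V^\dagger)_{kj}$ back into the expansion of $\ket{\psi}$ and regrouping the sums gives
\begin{equation*}
\ket{\psi} = \sum_{k=1}^{d'} \lambda_k\,\ket{u_k}\otimes\ket{v_k},
\end{equation*}
where $\ket{u_k}\coloneqq\sum_i U_{ik}\ket{a_i}$ and $\ket{v_k}\coloneqq\sum_j \overline{V_{jk}}\,\ket{b_j}$. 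Orthonormality of $\{\ket{u_k}\}$ and $\{\ket{v_k}\}$ follows immediately from the unitarity of $U$ and $V$. Real positivity of the $\lambda_k$ is exactly the SVD convention; any $\lambda_k$ that happens to vanish can be dropped and the corresponding $\ket{u_k},\ket{v_k}$ discarded (or retained as part of the orthonormal sets without affecting the sum), which is why $d'=\min(d_A,d_B)$ suffices. Finally, normalization $\sum_k \lambda_k^2 = 1$ is obtained by computing $\braket{\psi}{\psi}=\operatorname{Tr}(C^\dagger C)=\operatorname{Tr}(\Sigma^\dagger\Sigma)=\sum_k \lambda_k^2$ and using $\braket{\psi}{\psi}=1$.

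An alternative route would be to diagonalize the reduced density operator $\rho_A = \operatorname{Tr}_B(\ketbra{\psi}{\psi})$, declare its eigenvectors to be the $\ket{u_k}$ and its eigenvalues to be $\lambda_k^2$, and then define $\ket{v_k}\coloneqq \lambda_k^{-1}\,(\bra{u_k}\otimes\mathds{1})\ket{\psi}$ for nonzero $\lambda_k$, checking orthonormality directly. I would prefer the SVD route because it treats $\mathcal{H}_A$ and $\mathcal{H}_B$ symmetrically and makes the bound $d'=\min(d_A,d_B)$ transparent from the shape of $\Sigma$. There is no genuine obstacle in either approach; the only care needed is the bookkeeping to ensure the constructed sets $\{\ket{u_k}\},\{\ket{v_k}\}$ are orthonormal (immediate from unitarity) and that the $\lambda_k$ end up real and positive (which is the defining property of singular values, up to absorbing phases into $\ket{u_k}$ or $\ket{v_k}$ if one starts from an eigendecomposition of $C^\dagger C$ instead of SVD directly).
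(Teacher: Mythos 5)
Your proposal is correct, and in fact the paper offers no proof of this lemma at all: it is quoted as the standard Schmidt decomposition (a textbook consequence of the singular value decomposition), so there is nothing in the paper to compare against. Your SVD argument is precisely the canonical derivation, and the normalization and orthonormality steps are handled correctly. The only point worth flagging is one you already noticed: as stated, with strictly positive $\lambda_i$ and the sum running exactly to $d'=\min(d_A,d_B)$, the lemma implicitly assumes full Schmidt rank; for states of lower Schmidt rank $r<d'$ (e.g.\ product states) one must either truncate the sum at $r$ or tolerate zero coefficients. This does not affect how the lemma is used later in the paper, since the Schmidt projector $\Pi=\sum_i \ket{u_i}\bra{u_i}\otimes\ket{v_i}\bra{v_i}$ only needs to cover the terms actually appearing in the decomposition, and including extra orthonormal vectors with zero weight is harmless.
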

Combining the use of the Schimdt decomposition with Masanes' proposition, we can state that all the convexly extremal correlations in such a scenario can be realized using just qubits for each party. 
The following general result, previously noted in Ref.~\cite{Sikora_2016}, is worth highlighting:
\begin{lemma}
\label{cor2.1}
    Consider a quantum correlation in the bipartite Bell scenario, where the HSD for one of the parties, say Alice, is $d_A$. Then, the HSD of the other party, say Bob, is less than or equal to the one required by Alice, $d_B\leq d_A$.
\end{lemma}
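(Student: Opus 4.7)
The plan is to combine purification with the Schmidt decomposition (Lemma~\ref{lem:schmidt}). Suppose the correlation is produced by a bipartite state $\rho_{AB}$ on $\mathcal{H}_A \otimes \mathcal{H}_B$ with $\dim \mathcal{H}_A = d_A$, together with POVMs $\{A_x^a\}$ for Alice and $\{B_y^b\}$ for Bob. First, I would purify $\rho_{AB}$ by appending an ancilla $\mathcal{H}_R$ on Bob's side to obtain a pure state $\ket{\psi}\in \mathcal{H}_A \otimes (\mathcal{H}_B \otimes \mathcal{H}_R)$ satisfying $\Tr_R[\ket{\psi}\bra{\psi}] = \rho_{AB}$; Bob's measurement operators are replaced by $B_y^b \otimes I_R$, which leaves the Born-rule probabilities unchanged.

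Second, I would apply Lemma~\ref{lem:schmidt} to the bipartition $A \,|\, BR$: the Schmidt rank is bounded above by $\dim \mathcal{H}_A = d_A$, so $\ket{\psi}$ is actually supported on $\mathcal{H}_A \otimes \mathcal{H}_B'$, where $\mathcal{H}_B' \subseteq \mathcal{H}_B \otimes \mathcal{H}_R$ is a subspace of dimension at most $d_A$. Let $\Pi$ denote the orthogonal projector onto $\mathcal{H}_B'$.

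Third, I would truncate Bob's POVMs to $\mathcal{H}_B'$ via the projector-sandwich $\tilde{B}_y^b \coloneqq \Pi\,(B_y^b \otimes I_R)\,\Pi$. These elements sum over $b$ to $\Pi = I_{\mathcal{H}_B'}$ for each fixed $y$, so they form a bona fide POVM on $\mathcal{H}_B'$; and because $(I_A\otimes\Pi)\ket{\psi}=\ket{\psi}$, inserting $\Pi$'s around $B_y^b\otimes I_R$ in the Born-rule expression does not change any probability. This yields a realization of the same correlation in which Alice has HSD $d_A$ and Bob has HSD at most $d_A$, which is the claim.

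The only real subtlety---more a bookkeeping check than an obstacle---is verifying that the projector-sandwich simultaneously preserves the POVM normalization on $\mathcal{H}_B'$ and the observed probabilities; once this is confirmed, the lemma follows immediately from Lemma~\ref{lem:schmidt}. Note that the initial purification step is essential precisely because the lemma is asserted for arbitrary (possibly mixed-state) quantum correlations, not just convexly extremal ones, so one cannot simply invoke Schmidt decomposition on $\rho_{AB}$ directly.
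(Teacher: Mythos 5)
Your proposal is correct and follows essentially the same route as the paper: reduce to a pure state, apply the Schmidt decomposition across the $A|B$ cut, and sandwich Bob's measurement operators with the projector onto the (at most $d_A$-dimensional) Schmidt support, using invariance of the state under that projector plus cyclicity of the trace to preserve the Born-rule probabilities. If anything, you are slightly more explicit than the paper in specifying that the purifying ancilla is absorbed into Bob's side so that Alice's dimension stays fixed at $d_A$, which is a welcome bookkeeping clarification rather than a different argument.
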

Notice that the lemma implies that if we know that a quantum correlation can be produced with HSDs $d_A$ and $d_B$ for Alice and Bob, respectively, then it can also be produced with HSD $d=\mbox{min}\{d_A,d_B\}$ for both parties.
\begin{proof}
Suppose we have a quantum distribution, $P(a_1,a_2|x_1,x_2)$ as specified in Eq. (\hyperref[3]{3}) with ${N=2}$. Without loss of generality we will consider pure states. Note that we can always purify a state by increasing the HSD and if the HSD of the mixed state is finite, the purification of it will also have finite HSD.
Consider the Schmidt decomposition of a bipartite pure state as per Lemma~\ref{lem:schmidt}. Then, we can define a \emph{Schmidt projector} as follows:
\begin{equation}\label{4}
    \Pi \coloneqq \sum_{i=1}^{d'} \ket{u_i}\bra{u_i} \otimes \ket{v_i}\bra{v_i}.
\end{equation}
Note that a bipartite pure state ${\rho} = \ket{\psi}\bra{\psi}$, with $\ket{\psi} \in \mathcal{H}_A \otimes \mathcal{H}_B$, remains invariant under the action of this projection operator. However, it can act non-trivially on the measurement restricting the dimension of the vector space on which they operate, while ensuring that the overall correlation remains invariant.
That is, 
if a conditional probability $P(a_1,a_{2}|x_1,x_{2})$ with outcomes $a_1,a_2$ and settings $x_1,x_2$ can be realized by some application of local measurement (i.e., $\hat{M}_{a_1|x_1}$ and $\hat{M}_{a_2|x_2}$) on a bipartite pure state $\rho $ then, 
\begin{align}\begin{split}\label{5}
    &P(a_1,a_{2}|x_1,x_{2}) \\
    &= \Tr[\rho\vdot( M_{a_1|x_1} \otimes M_{a_{2}|x_{2}})] \\
    &= \Tr[\Pi \rho \Pi\vdot( M_{a_1|x_1} \otimes M_{a_{2}|x_{2}})]  \\
    &=\Tr[\rho \vdot\Pi( M_{a_1|x_1} \otimes M_{a_{2}|x_{2}})\Pi]  \\
    &=\Tr[\rho\vdot( \tilde{M}_{a_1|x_1} \otimes \tilde{M}_{a_{2}|x_{2}})] \\
\end{split}
\end{align}
where
\begin{align*}
    \tilde{M}_{a_1|x_1} \coloneqq \sum_{i=1}^{d'} \sum_{j=1}^{d'} \ket{u_i} \bra{u_i} \hat{M}_{a_1|x_1} \ket{u_j} \bra{u_j}, \nonumber \\
    \tilde{M}_{a_2|x_2} \coloneqq \sum_{i=1}^{d'} \sum_{j=1}^{d'} \ket{v_i} \bra{v_i} \hat{M}_{a_2|x_2} \ket{v_j} \bra{v_j}.
\end{align*}
are the new local measurement operators that act only on the $d'$ dimensional subspace of $\mathcal{H}_{A}$ and $\mathcal{H}_{B}$, respectively.
The proof exploits the invariance of the state under the action of the $\hat{\Pi}$ and the cyclicity of trace in the second and third steps respectively.\\ 
\end{proof}
It is interesting to apply Lemma~\ref{cor2.1} to the bipartite Bell scenario where one party, say Alice, has 2 settings and 2 outcomes. It is particularly useful for this scenario because we know from Proposition~\ref{pr2} that, without loss of generality, Alice's Hilbert space dimension can always be set to two, $d_A=2$, in order to achieve all the quantum convexly extremal correlations. Then by Lemma~\ref{cor2.1}, one concludes that a two-dimensional Hilbert space for Bob is sufficient to achieve those correlations for any arbitrary number of settings and outcomes on his side.

\subsection{Multipartite Scenarios}\label{s2.c}
We extend our previous result, Lemma~\ref{cor2.1}, to $N$-partite Bell scenarios. Given a certain quantum correlation for which the required local HSDs for the first ${N{-}1}$ parties are known, then we derive an upper bound for the HSD of the last party. This is formalized in the following lemma.
\begin{lemma}
\label{cor2.2}
    Consider a quantum correlation in a $N$-partite Bell scenario where the HSDs of the first ${N{-}1}$ parties are $d_1, d_2, ..., d_{N{-}1}$, respectively. Then, the necessary HSD of the last party is upper bounded as $d_N \leq d_1\times d_2 \times \cdots \times d_{N{-}1}$.
\end{lemma}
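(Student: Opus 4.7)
The plan is to reduce the $N$-partite situation to the bipartite one already handled by Lemma~\ref{cor2.1}. The natural move is to regard the first $N{-}1$ parties collectively as a single composite party, call it ``Alice,'' whose local Hilbert space is the tensor product $\mathcal{H}_A \coloneqq \mathcal{H}_1 \otimes \cdots \otimes \mathcal{H}_{N-1}$ of dimension $D \coloneqq d_1 \times d_2 \times \cdots \times d_{N-1}$. The $N$-th party plays the role of ``Bob,'' with local Hilbert space $\mathcal{H}_B \coloneqq \mathcal{H}_N$ of dimension $d_N$. Under this regrouping, the joint measurement operator $\hat{M}_{a_1|x_1}\otimes\cdots\otimes\hat{M}_{a_{N-1}|x_{N-1}}$ used by the first $N{-}1$ parties is simply a particular POVM element acting on $\mathcal{H}_A$, and the global correlation can be written exactly in the bipartite Born form of Eq.~(\hyperref[3]{3}) with $N=2$.

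With this reinterpretation in hand, the argument used in the proof of Lemma~\ref{cor2.1} applies verbatim: first purify the $N$-partite state across the $A|B$ cut (purification does not spoil finiteness of $d_N$), then invoke Schmidt decomposition (Lemma~\ref{lem:schmidt}) across the cut. The Schmidt rank is bounded by $\min(D,d_N)$, and the associated Schmidt projector $\Pi$ leaves $\rho$ invariant while, after using the cyclicity of trace, it sandwiches the composite-Alice and Bob measurement operators, restricting them to act on subspaces of dimension at most $\min(D, d_N)$ on each side. Consequently, the correlation can be reproduced with $d_N$ replaced by $\min(D, d_N) \leq D$, which is precisely the claimed bound.

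The only conceptual subtlety, and the one I would pause on, is that the projected ``Alice-side'' measurement $\tilde{M}_{a_1\cdots a_{N-1}|x_1\cdots x_{N-1}} \coloneqq \sum_{i,j}\ket{u_i}\bra{u_i}\bigl(\hat{M}_{a_1|x_1}\otimes\cdots\otimes\hat{M}_{a_{N-1}|x_{N-1}}\bigr)\ket{u_j}\bra{u_j}$ need no longer factorize as a tensor product of local operators across the first $N{-}1$ parties. This is harmless for the current lemma, whose statement bounds only $d_N$ and takes $d_1,\dots,d_{N-1}$ as given; we are not trying to preserve the individual factorization on the Alice side. For this reason the argument is essentially a one-line corollary of Lemma~\ref{cor2.1} once the bipartite regrouping is made explicit, and I would expect no further obstacle.
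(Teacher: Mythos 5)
Your overall strategy---grouping the first $N{-}1$ parties into a composite ``Alice'', Schmidt-decomposing across the $\bar{A}\,|\,N$ cut, and compressing the last party onto the support of its reduced state---is exactly the route the paper takes. But your handling of the Alice side contains a genuine misstep: you compress the composite Alice operator as well and then declare the resulting loss of tensor-product factorization across the first $N{-}1$ parties ``harmless.'' It is not. The lemma asserts that the correlation is still realizable \emph{in the $N$-partite Bell scenario}, i.e., by $N$ local POVMs $\hat M_{a_1|x_1}\otimes\cdots\otimes\hat M_{a_N|x_N}$ acting on an $N$-partite state, with the last party's dimension at most $d_1\cdots d_{N-1}$ while the first $N{-}1$ parties keep dimensions $d_1,\dots,d_{N-1}$. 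If the compressed Alice-side operator no longer factorizes, what you have exhibited is only a \emph{bipartite} realization (composite Alice versus the last party), which is a strictly weaker claim and, in particular, could not feed into Theorem~\ref{th1}, where each of the first $N{-}1$ parties must measure locally on its own qubit. (A related technical point: sandwiching with the Schmidt projector $\Pi=\sum_i\ket{u_i}\bra{u_i}\otimes\ket{v_i}\bra{v_i}$ does not even yield an operator of product form $\tilde M_{\bar A}\otimes\tilde M_N$; the projector that does the intended job is the product of the local support projectors.)

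The repair is immediate and is what the paper's proof implicitly does: do not touch the Alice side at all. Let $P_N=\sum_i\ket{v_i}\bra{v_i}$ project onto the support of the last party's reduced state, whose dimension is at most $\min(d_1\cdots d_{N-1},\,d_N)$ by the Schmidt decomposition across the cut. Since $\rho=(\mathds{1}_{\bar A}\otimes P_N)\,\rho\,(\mathds{1}_{\bar A}\otimes P_N)$, cyclicity of the trace gives $\Tr[\rho\,(\hat M_{a_1|x_1}\otimes\cdots\otimes\hat M_{a_{N-1}|x_{N-1}})\otimes \hat M_{a_N|x_N}]=\Tr[\rho\,(\hat M_{a_1|x_1}\otimes\cdots\otimes\hat M_{a_{N-1}|x_{N-1}})\otimes P_N\hat M_{a_N|x_N}P_N]$. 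Only the last party's measurements are compressed; the first $N{-}1$ parties' local measurements, dimensions, and tensor-product structure are untouched, so the realization remains a valid $N$-partite one with $d_N\le d_1\times\cdots\times d_{N-1}$. With that adjustment your argument coincides with the paper's.
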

\begin{proof}
Suppose we have the distribution as specified in $\text{Eq. (\hyperref[3]{3})}$. Without loss of generality, we can work with pure states using the same reasoning as in the bipartite case. Consider a pure state $\ket{\psi}\in \mathcal{H}_{\bar{A}}\otimes\mathcal{H}_{A_N}$, where $\mathcal{H}_{A_N}$ is the Hilbert space for the last party with dimension $d_N$ and $\mathcal{H}_{\bar{A}} = \bigotimes _{i=1}^{N{-}1} \mathcal{H}_{A_i}$ is the joint Hilbert space for the first ${N{-}1}$ parties (we denote the $i^{th}$ party's local Hilbert space by $\mathcal{H}_{A_i}$ for $i \in \{1,...,N{-}1\}$)  with dimension  $d=d_1 \times d_2...\times d_{N{-}1}$. This can be seen as a bipartite state where the first ${N{-}1}$ parties are grouped into an effective first party, which allows us to re-use the idea of Lemma~\ref{cor2.1}. Schmidt decomposition tells us that we can always find a $d$-dimensional subspace of $\mathcal{H}_{A_N}$ which along with $\mathcal{H}_{\bar{A}}$ suffices to represent the quantum state and the further steps (as in Lemma~\ref{cor2.1}) lead to the result that a local Hilbert space of dimension $d'_N = d = d_1 \times d_2...\times d_{N{-}1}$ is sufficient for the last party in order to achieve the given correlation.  
\end{proof}
One particularly interesting example is a $N$-partite scenario where the first ${N{-}1}$ parties have two settings and two outcomes. Applying Proposition~\ref{pr2}, we know that qubits are sufficient for the first ${N{-}1}$ parties in order to achieve every convexly extremal quantum correlation. Furthermore, Lemma~\ref{cor2.2} demonstrates that the local HSD of the last party is less than or equal to $d=2^{N{-}1}$. Rephrasing this fact as our first practical consequence:
\begin{theorem}
Suppose $f$ is a convex function of the correlations in an $N$-partite Bell scenario where the first ${N{-}1}$ parties have two settings and two outcomes. Then, the maximum value of $f$ over \emph{all} quantum correlations is equal to the maximum value of $f$ optimized over the set of quantum correlations generated by measurements on a \emph{pure state} with the first ${N{-}1}$ local Hilbert space dimensions equal to $2$ and the final local Hilbert space dimension equal to $2^{N{-}1}$.
\label{th1}
\end{theorem}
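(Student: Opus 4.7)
The plan is to chain together three ingredients already in hand: a Bauer-type reduction to convexly extremal correlations, Proposition~\ref{pr2}, and Lemma~\ref{cor2.2}.

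\textbf{Convexity step.} Since $f$ is convex on the convex quantum set $\mathcal{Q}$, its maximum over $\mathcal{Q}$ is attained on the convexly extremal points of $\mathcal{Q}$ (Bauer's maximum principle). So it suffices to show that every convexly extremal correlation $P^{\star}$ in this scenario admits a pure-state realization with local HSDs $(2, 2, \dots, 2, 2^{N-1})$; the inequality in the other direction is immediate because the candidate set is a subset of $\mathcal{Q}$.

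\textbf{Reducing the first $N-1$ parties to qubits.} Fix a convexly extremal $P^{\star}$ and any quantum realization of it. I then apply Proposition~\ref{pr2} to party $1$: since that party has two settings and two outcomes, its local HSD can be collapsed to $2$ without altering the joint correlation. Because the correlation is unchanged, it is still convexly extremal, so the same reduction applies to party $2$, and so on through party $N-1$. After $N-1$ invocations, parties $1, \dots, N-1$ each have HSD $2$ while party $N$ still carries some (a priori arbitrary, possibly infinite) dimension.

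\textbf{Bounding party $N$.} Group the first $N-1$ qubits into a single effective ``Alice'' of dimension $2^{N-1}$, turning the problem into a bipartite Bell scenario with party $N$ as ``Bob''. I now invoke Lemma~\ref{cor2.2}, whose proof already includes purifying the global state (absorbing the purifying register into party $N$'s side) and then cutting $N$'s space down via the Schmidt projector of Eq.~(\ref{4}). This yields a pure-state realization of $P^{\star}$ with $d_N \le d_1 d_2 \cdots d_{N-1} = 2^{N-1}$, giving exactly the advertised local dimensions.

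\textbf{Main obstacle.} The only step requiring real care is the iterative use of Proposition~\ref{pr2}: I have to check that each application preserves both the correlation and the hypotheses needed for the next application. Because Proposition~\ref{pr2}'s replacement acts only on one party's state and measurement operators and leaves the joint distribution identical, convex extremality is preserved automatically, and the iteration goes through. A secondary, essentially cosmetic, concern is that $\mathcal{Q}$ is not known to be closed; but the statement already posits the existence of the maximum, so Bauer's principle can be applied directly and no approximation argument is needed.
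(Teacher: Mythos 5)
Your proposal is correct and follows essentially the same route as the paper, which presents Theorem~\ref{th1} simply as a rephrasing of the chain you spell out: convexity pushes the maximum onto convexly extremal points, Proposition~\ref{pr2} reduces each of the first $N{-}1$ parties to qubits, and Lemma~\ref{cor2.2} (grouping them into an effective party and using the Schmidt projector) bounds the last party by $2^{N-1}$ with a pure-state realization. Your explicit care about iterating Proposition~\ref{pr2} while preserving the correlation and its extremality is a detail the paper leaves implicit, but it is the same argument.
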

\subsection{Realizing convexly non-extremal correlations}\label{s2.d}

Let $\mathcal{Q}_{\vec{d}}$ 
 represent the set of all correlations realizable in a given Bell scenario by measuring $N{-}\text{partite}$ quantum states using local Hilbert space dimensions of at most $\vec{d}$. That is, $\vec{d}$ denotes a list of local Hilbert space dimensions. In this notation, the set of \emph{all} quantum correlations for that Bell scenario is given by $\mathcal{Q}_{\vec{\infty}}$, meaning the set of correlations realizable without any finite restriction to any of the local Hilbert space dimensions. 
Is there some list of finite Hilbert space dimensions $\vec{d^{\star}}$ such that $\mathcal{Q}_{\vec{\infty}} = \mathcal{Q}_{\vec{d^{\star}}}$? Thus far our upper bounds on Hilbert space dimensions only apply to \emph{convexly extremal} quantum correlations. But what if we want to bound the dimensions required to explain \emph{any} quantum correlation, convexly extremal or not?
Our prior results allow to say that -- for certain Bell scenarios -- there exists a finite dimension $d$ such that
\begin{subequations}
\begin{align}
\mathcal{Q}_{\vec{d}}\subseteq\mathcal{Q}_{\vec{\infty}} &\quad\text{yet}\quad\textsf{ExtremalPoints}(\mathcal{Q}_{\vec{\infty}}) \subset \mathcal{Q}_{\vec{d}}\,.
\shortintertext{Equivalently, }\label{eq:convexhull}
 &\mathcal{Q}_{\vec{\infty}} = \textsf{ConvexHull}(\mathcal{Q}_{\vec{d}})\,.
\end{align}
\end{subequations}
Our desideratum here, however, is a \emph{different} dimension bound in order to say $\mathcal{Q}_{\vec{\infty}} = \mathcal{Q}_{\vec{d^{\star}}}$ for the same Bell scenarios.
Nevertheless, we can use physical convexification procedures to obtain upper bounds on such a $\vec{d^{\star}}$ from upper bounds on $\vec{d}$.
We will show that Eq.~\eqref{eq:convexhull} implies that
\begin{align}\begin{split}
&\mathcal{Q}_{\vec{\infty}} = \mathcal{Q}_{\vec{d^{\star}}}\\
\text{where } \; &\vec{d^{\star}} = \vec{d}\times\textsf{CathNum}(\mathcal{Q}_{\vec{d}})
\end{split}\end{align}
where we have introduced the notation $\textsf{CathNum}(\mathcal{S})$ to indicate the Carathéodory number of $\mathcal{S}$ for any nonconvex set $\mathcal{S}$. Given any nonconvex set $\mathcal{S}$ the Carathéodory number of $\mathcal{S}$ is the maximum number of distinct points within $S$ which must be convexly combined to express any point in $\textsf{ConvexHull}(\mathcal{S})$.
Here it is worth clarifying how physical convexification achieves the mathematical convex hull of $\mathcal{Q}_{\vec{d}}$. The physical protocol for convexification corresponds to the preparation of classical mixtures of convexly extremal quantum correlations (which are realizable using a quantum state and measurements with local HSD specified by $\vec{d}$). Such mixtures can be implemented by introducing an ancillary quantum system for each party at the source. These ancillae encode classical randomness and determine which state and measurement settings are used in a given realization. By measuring the ancilla prior to performing the local measurements, one selects among a finite set of correlations in $\mathcal{Q}_{\vec{d}}$ according to a classical probability distribution. As a result, any point in $\textsf{ConvexHull}(\mathcal{Q}_{\vec{d}})$  can be physically realized, and the mathematical operation of taking the convex hull is faithfully reproduced.
Carathéodory's eponymous theorem states that for any nonconvex set $\mathcal{S}$ it holds that $\textsf{CathNum}(\mathcal{S})\leq \textsf{AffineDimension}(\mathcal{S})+1$. Fenchel's theorem is a refinement of Carathéodory's theorem which states that if the set $\mathcal{S}$ is pathwise-connected, then $\textsf{CathNum}(\mathcal{S})\leq \textsf{AffineDimension}(\mathcal{S})$. We know that $\mathcal{Q}_{\vec{d}}$ is pathwise connected, and hence we have
\begin{align}
\mathcal{Q}_{\vec{\infty}} = \mathcal{Q}_{\vec{d}\times\textsf{AffineDimension}(\mathcal{Q}_{\vec{d}})}\,.
\end{align}
Happily, the affine dimension of $\mathcal{Q}_{\vec{d}}$ is readily known for any Bell scenario, as $\textsf{AffineDimension}(\mathcal{Q}_{\vec{d}})=\textsf{AffineDimension}(\mathcal{Q}_{\vec{\infty}})$.
If party $i$ has $m_i$ different measurement settings, and the measurement setting $j$ for party $i$ (denoted $j_i$) has $k$ possible outcomes, then per Ref.~\citep[Theorem 1]{Lifting} we have
\begin{equation}\label{8}
    \textsf{AffineDimension}(\mathcal{Q}_{\vec{\infty}}) = \prod_{i=1}^{N} {\left(\sum_{j=1}^{m_i} {\left(k_{j_i}-1\right)}+1\right)}-1
\end{equation}
Putting this all together, we have:
\begin{theorem}
Suppose $p$ is a correlation for some $N$-partite Bell scenario where the first ${N{-}1}$ parties have two settings and two outcomes. Then, $p$ admits \emph{some} quantum realization (i.e., $p\in \mathcal{Q}_{\vec{\infty}}$) if and only if $p$ admits a quantum realization where the first ${N{-}1}$ parties are measuring on local Hilbert space dimensions equal to $2\times\textsf{AffineDimension}(\mathcal{Q}_{\vec{\infty}})$ and the last party is measuring on local Hilbert space dimension equal to $2^{N{-}1}\times\textsf{AffineDimension}(\mathcal{Q}_{\vec{\infty}})$.
\label{th2}
\end{theorem}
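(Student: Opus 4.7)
The plan is to combine three ingredients already set up in the excerpt: Theorem~\ref{th1}, the convex-hull identity in Eq.~\eqref{eq:convexhull}, and the physical ancilla-based convexification described in Section~\ref{s2.d}, with Fenchel's refinement of Carathéodory supplying the correct multiplicative factor. The converse direction of the biconditional is immediate because $\mathcal{Q}_{\vec{d^{\star}}}\subseteq\mathcal{Q}_{\vec{\infty}}$ for any finite $\vec{d^{\star}}$, so only the forward direction requires work.

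First I would fix $\vec{d}=(2,2,\ldots,2,2^{N-1})$. Theorem~\ref{th1} implies that every convexly extremal point of $\mathcal{Q}_{\vec{\infty}}$ lies in $\mathcal{Q}_{\vec{d}}$, hence $\mathcal{Q}_{\vec{\infty}}=\textsf{ConvexHull}(\mathcal{Q}_{\vec{d}})$. Since $\mathcal{Q}_{\vec{d}}$ is pathwise-connected (the set of states and POVMs on each fixed finite-dimensional local Hilbert space is pathwise-connected, and the Born rule is continuous in both), Fenchel's theorem decomposes any $p\in\mathcal{Q}_{\vec{\infty}}$ as $p=\sum_{\ell=1}^{L}q_\ell\,p^{(\ell)}$ with $p^{(\ell)}\in\mathcal{Q}_{\vec{d}}$ and $L\leq\textsf{AffineDimension}(\mathcal{Q}_{\vec{d}})=\textsf{AffineDimension}(\mathcal{Q}_{\vec{\infty}})$; the last equality holds because passing to a convex hull preserves affine dimension, and the explicit value is given by Eq.~\eqref{8}. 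Writing $D\coloneqq\textsf{AffineDimension}(\mathcal{Q}_{\vec{\infty}})$, each $p^{(\ell)}$ comes with its own pure state $\ket{\psi^{(\ell)}}$ and local POVMs on Hilbert spaces of dimensions $\vec{d}$. I would then promote this classical mixture to a single quantum realization by equipping each party with an ancilla register of dimension $L$: the source distributes $\ket{\Psi}=\sum_{\ell=1}^{L}\sqrt{q_\ell}\,\ket{\psi^{(\ell)}}\otimes\ket{\ell}^{\otimes N}$ (one flag $\ket{\ell}$ per party), each party measures its ancilla projectively in the $\{\ket{\ell}\}$ basis, and then applies the strategy-$\ell$ POVM on the original register conditioned on the readout. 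The marginal statistics are exactly $\sum_\ell q_\ell\,p^{(\ell)}=p$, and each local dimension has been inflated by the factor $L\leq D$, giving $2D$ for the first $N-1$ parties and $2^{N-1}D$ for the last party, as claimed.

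The main obstacle I anticipate is not any single step in isolation but rather two supporting facts that Section~\ref{s2.d} asserts in passing and which warrant an explicit sentence: pathwise-connectedness of $\mathcal{Q}_{\vec{d}}$ (so that Fenchel, rather than Carathéodory with its looser $+1$, applies) and the equality $\textsf{AffineDimension}(\mathcal{Q}_{\vec{d}})=\textsf{AffineDimension}(\mathcal{Q}_{\vec{\infty}})$. Once these two bookkeeping items are granted, the theorem reduces to composing Theorem~\ref{th1}'s dimension list with the Fenchel mixture length $L$.
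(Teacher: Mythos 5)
Your proposal follows essentially the same route as the paper: fix $\vec{d}=(2,\ldots,2,2^{N-1})$ from the extremal-point dimension bounds, invoke Fenchel's refinement of Carathéodory together with pathwise-connectedness and the equality $\textsf{AffineDimension}(\mathcal{Q}_{\vec{d}})=\textsf{AffineDimension}(\mathcal{Q}_{\vec{\infty}})$, and realize the resulting mixture physically with per-party ancilla flags that multiply each local dimension by the mixture length. This matches the paper's argument in Sec.~\ref{s2.d}, with your explicit purified-flag construction being a harmless concretization of the paper's ancilla-based convexification.
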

\section{Bell Scenarios with Communication (Bell+)}\label{s3}
We are now focusing on processes where a quantum state is distributed among multiple parties but the mechanisms for how each party obtains its outcome are no longer limited to just some settings which are private to that party or even settings which are exogenous. Rather, we allow for richer variety, such as a common setting shared by multiple parties and allowing the settings for one party to depend on the outputs of some other parties. Such processes are known as Bell scenarios with communication, hereafter denoted as \emph{Bell+} scenarios for brevity\cite{Chaves}.
In Bell and Bell+ scenarios, each party utilizes a measurement device which takes into account some classical values (settings) when determining how to perform their measurement on their portion of the quantum state. Unique to Bell+ scenarios, however, is the possibility that these setting values for a party $A$ may not be freely determined solely by $A$, but rather may be restricted to match settings or outputs of the other parties. In this sense, Bell+ scenarios can be considered as restrictions on Bell scenarios. That is, although in a Bell+ scenario, the parties are not freely toggling the settings of their local measurement devices, we can nevertheless imagine a maximal interruption experiment where they can. The scenario constructed in that experiment will be a standard Bell scenario without communication. The probabilities of Bell$+$ will be related to the probabilities of the maximal interruption scenario through a consistency constraint\cite{wolfe2021quantum, van2019quantum}. Let us explain this idea with a simple example. 
\begin{figure}[h!]\label{f3}
    \centering
    \includegraphics[width=0.5\linewidth]{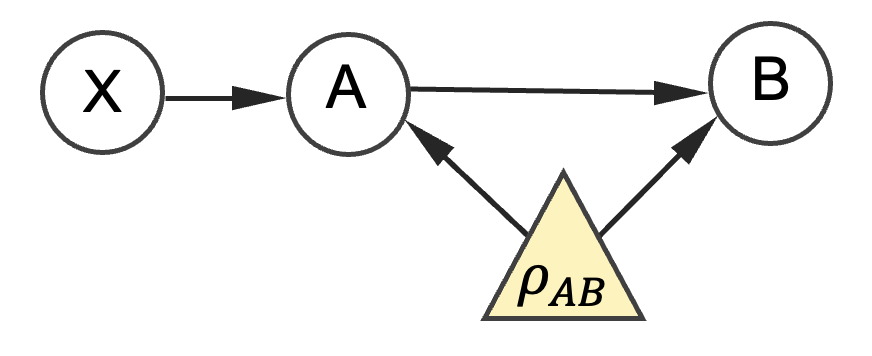}
    \caption{Instrumental scenario}
    \label{fig:instrumental}
\end{figure}
\begin{figure}[h!]\label{f4}
    \centering
    \includegraphics[width=0.5\linewidth]{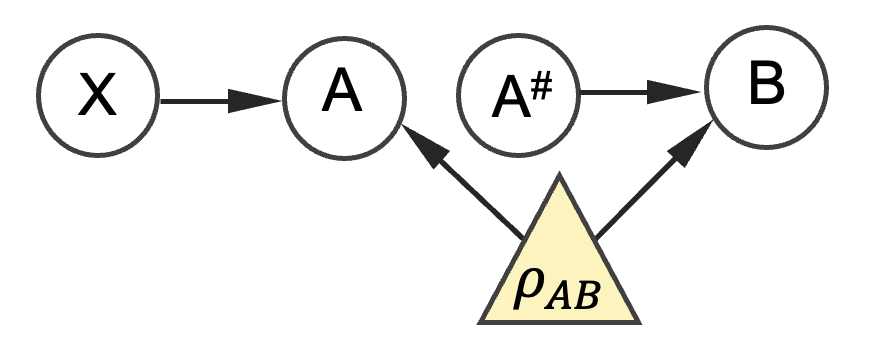}
    \caption{Maximal interruption for the instrumental scenario}
    \label{fig:interruption}
\end{figure}
Consider the Bell+ scenario shown in Fig.~\hyperref[f3]{3}, commonly referred to as the instrumental scenario, which features two parties: Alice ($A$) and Bob ($B$). Unlike in the standard Bell scenario, where measurement settings are chosen independently and privately by each party, the instrumental scenario allows Bob's measurement settings to depend explicitly on Alice's outcomes. To analyze this Bell+ scenario more clearly, we consider a modified version in which the dependency between parties is deliberately interrupted. This leads to the maximal interrupted scenario (Fig.~\hyperref[f4]{4})\cite{wolfe2021quantum}, where the dependency between Alice’s outcome $A$ and Bob’s setting is replaced by an additional exogenous variable $A^{\#}$ that acts only as Bob's setting. The correlations that can be produced in the instrumental scenario can be recovered from the maximal interrupted scenario by post-selecting on $A^{\#} = A$. Geometrically, this corresponds to projecting the higher-dimensional space of the standard Bell scenario onto a strictly smaller subspace that corresponds to the instrumental scenario. Hence, the set of correlations that can be produced in the instrumental scenario can be understood as a projection of the set that can be produced in the standard Bell scenario. Note that this procedure is applicable to any Bell+ scenario.
This geometrical point of view leads to the following general result (as noted in ~\cite{van2019quantum} for the case of the instrumental scenario): \\
\begin{prop}
Any correlation achievable in a Bell+ scenario can be realized using a quantum system whose Hilbert space dimension does not exceed the dimension required for the corresponding maximal interrupted scenario.
\end{prop}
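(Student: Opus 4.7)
The plan is to exploit the geometric observation already made in the excerpt: the correlation set of a Bell+ scenario is the image, under a post-selection/projection, of the correlation set of the corresponding maximal interrupted Bell scenario. First I would take an arbitrary correlation $P$ achievable in the Bell+ scenario and lift it to a correlation $\tilde P$ of the maximal interrupted scenario that agrees with $P$ after conditioning on the consistency event identifying the exogenous ``interruption'' variables with the party outputs they are supposed to copy (for instance $A^{\#}=A$ in the instrumental example). Such a lift always exists because, by construction, Bell+ correlations are precisely those obtained from maximal-interruption correlations via exactly this post-selection.

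Next I would invoke whichever finite-dimensional realization bound applies to the maximal interrupted scenario (e.g.\ the theorems of Section~\ref{s2} when the scenario lies within their scope), obtaining a quantum state $\rho$ and local POVMs $\{\hat A_i(a_i|x_i)\}$ acting on local Hilbert spaces of dimensions $\vec d$ whose Born-rule statistics reproduce $\tilde P$. The key step is then to reuse $(\rho,\{\hat A_i(a_i|x_i)\})$ inside the Bell+ circuit: each party's measurement device is the same physical object as in the maximal interrupted scenario, and the only difference is that some classical inputs to those devices are now supplied by other parties' outputs rather than by an exogenous source. Since a POVM device produces the same statistics regardless of where its classical input originates, running these devices in the Bell+ circuit reproduces precisely the conditional distribution obtained from $\tilde P$ by imposing the consistency constraint, which is $P$ by construction. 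No enlargement of the Hilbert space is introduced along the way, so the bound $\vec d$ carries over verbatim.

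The main obstacle I anticipate is making the ``same device, different input source'' step rigorous rather than merely intuitive. Concretely, one has to check that the joint Born-rule expression $\Tr[\rho \bigotimes_i \hat A_i(a_i|x_i)]$, when some of the $x_i$ are treated as random variables correlated with previously produced outputs, really does equal the corresponding conditional probability extracted from the fully exogenous realization. This amounts to a bookkeeping argument combining the chain rule of probability with the tensor-product (hence no-signaling) structure of the measurement operators, but it must be spelled out carefully to ensure that no hidden assumption (such as a measurement device adapting its POVM based on which post-selection branch it lies in) is being smuggled in. Once that equality is written out, the proposition follows immediately from the lift-realize-transport pipeline sketched above.
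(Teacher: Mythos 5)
Your proposal is correct and follows essentially the same route as the paper: the quantum set of the Bell+ scenario is the image of the maximal-interruption quantum set under the consistency post-selection, so any bounded-dimension realization of a lifted correlation can be rewired (outputs feeding settings) to realize the Bell+ correlation with no increase in Hilbert space dimension. Your explicit lift--realize--rewire bookkeeping is in fact more detailed than the paper's argument, which leaves the "same devices, wired inputs" step implicit in its geometric projection picture.
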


It follows from the projection structure that certain geometric properties of the quantum set may not be retained in the Bell+ scenario. In particular, convex extremality is not preserved under projection. A correlation can only be convexly extremal in the Bell+ scenario if it is the projection of some convexly extremal correlation in the corresponding maximal interruption scenario. However, the converse does not hold, since geometric projection does not preserve convex extremality. This argument allows us to draw the following corollaries based on Thm.\ref{th1} and Thm.\ref{th2}. 
\begin{cor}
    Suppose $f$ is a convex function of the correlations in an $N$-partite Bell+ scenario where the first ${N{-}1}$ parties have two settings and two outcomes. Then, the maximum value of $f$ over \emph{all} quantum correlations is equal to the maximum value of $f$ optimized over the set of quantum correlations generated by measurements on a \emph{pure state} with the first ${N{-}1}$ local Hilbert space dimensions equal to 2 and the final local Hilbert space dimension equal to $2^{N{-}1}$.
\end{cor}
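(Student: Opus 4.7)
The plan is to reduce the statement to Theorem~\ref{th1} via the projection that links any Bell+ scenario to its maximal interrupted Bell scenario, as developed earlier in this section and summarized by the preceding proposition. First I would make explicit the affine post-selection map $\pi$ that sends correlations of the maximal interrupted Bell scenario to correlations of the given Bell+ scenario by retaining only those entries satisfying the relevant consistency constraints (for instance, $A^{\#}=A$ in the instrumental example). The key observation is that $\mathcal{Q}^{\text{Bell+}}_{\vec{d}} = \pi(\mathcal{Q}^{\text{Bell}}_{\vec{d}})$ for every list $\vec{d}$ of local Hilbert space dimensions, including $\vec{d}=\vec{\infty}$: any quantum realization of a Bell+ correlation immediately supplies a quantum realization of the corresponding maximal interrupted Bell correlation with identical local HSDs, since the interruption is a purely classical modification, and the converse direction holds by post-selection.

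Second, I would lift the convex function $f$ on Bell+ correlations to $\tilde{f}\coloneqq f\circ\pi$ on correlations of the maximal interrupted Bell scenario. Affinity of $\pi$ implies $\tilde{f}$ is convex, and surjectivity of $\pi$ onto the relevant quantum sets yields
\begin{equation*}
\max_{p\in\mathcal{Q}^{\text{Bell+}}_{\vec{d}}} f(p) \;=\; \max_{q\in\mathcal{Q}^{\text{Bell}}_{\vec{d}}} \tilde{f}(q)
\end{equation*}
for every $\vec{d}$. Because the maximal interruption only converts endogenous settings into exogenous ones without altering any party's measurement device, the first $N{-}1$ parties retain two settings and two outcomes in the Bell scenario. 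Theorem~\ref{th1} then applies to $\tilde{f}$, certifying that its maximum over $\mathcal{Q}^{\text{Bell}}_{\vec{\infty}}$ is attained by a pure-state realization with local Hilbert space dimensions $(2,\ldots,2,2^{N-1})$. Pulling this realization back through $\pi$, which does not touch the quantum layer, yields a Bell+ realization with the same local HSDs that attains $\max f$, establishing the corollary.

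The main delicacy I anticipate is separating two nearby claims: the text emphasizes that convex extremality is \emph{not} preserved under $\pi$, which could superficially appear to block any transfer of dimension bounds between the two scenarios. The resolution is that I am not transferring an extremality claim but rather an optimization claim -- the maximum of a convex $f$ on the projected set equals the maximum of $f\circ\pi$ on the pre-image, precisely because $\pi$ is affine and surjective onto the relevant quantum sets. Making this distinction explicit, and verifying that the dimension-preservation guaranteed by the preceding proposition is tight enough to yield exactly $(2,\ldots,2,2^{N-1})$ rather than something looser, will be the points requiring the most care.
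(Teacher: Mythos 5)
Your proposal is correct and follows essentially the same route as the paper: the paper derives this corollary by translating Theorem~\ref{th1} through the maximal-interruption construction, using the fact that the Bell+ quantum set is the (affine, coordinate-projection) image of the interrupted Bell scenario's quantum set with no increase in local Hilbert space dimensions. Your explicit lifting $\tilde f = f\circ\pi$ and the surjectivity argument simply spell out the details the paper leaves implicit, and your remark that only the optimization claim (not extremality) needs to transfer matches the paper's own caveat.
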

\begin{cor}
Suppose $p$ is a correlation for some $N$-partite Bell+ scenario where the first ${N{-}1}$ parties have two settings and two outcomes. Then, $p$ admits \emph{some} quantum realization (i.e., $p\in \mathcal{Q}_{\vec{\infty}}$) if and only if $p$ admits a quantum realization where the first ${N{-}1}$ parties are measuring on local Hilbert space dimensions equal to $2\times\textsf{AffineDimension}(\mathcal{Q}_{\vec{\infty}})$ and the last party is measuring on local Hilbert space dimension equal to $2^{N{-}1}\times\textsf{AffineDimension}(\mathcal{Q}_{\vec{\infty}})$.
\end{cor}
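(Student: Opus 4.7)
The plan is to mirror the proof of Theorem~\ref{th2}, substituting each of its ingredients with the corresponding Bell+ analog. All the machinery required is already in place: the projection argument preceding the corollaries reduces the Bell+ problem to the associated maximal interruption Bell scenario, Theorem~\ref{th1} controls the HSDs for convexly extremal correlations there, and the Carath\'eodory/Fenchel convexification of Sec.~\ref{s2.d} will extend the bound from extremal to arbitrary correlations.

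First I would establish that every convexly extremal point of $\mathcal{Q}_{\vec{\infty}}^{\text{Bell+}}$ is already realizable with HSDs $\vec{d}=(2,\ldots,2,\,2^{N{-}1})$. By the remark preceding the corollaries, any such extremal point must arise as the projection (post-selection on the interrupting exogenous variables matching the dependent settings) of some convexly extremal correlation in the corresponding maximal interruption Bell scenario; the first $N{-}1$ parties still have two settings and two outcomes in that scenario, so Theorem~\ref{th1} supplies a pure-state realization with the claimed HSDs, and classical post-selection adds no Hilbert space overhead.

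Next I would apply the physical convexification of Sec.~\ref{s2.d} directly to the Bell+ set: every $p \in \mathcal{Q}_{\vec{\infty}}^{\text{Bell+}} = \textsf{ConvexHull}(\mathcal{Q}_{\vec{d}}^{\text{Bell+}})$ is a convex combination of points in $\mathcal{Q}_{\vec{d}}^{\text{Bell+}}$, which can be implemented physically by appending to each party a classical ancilla encoding the mixing variable. Since $\mathcal{Q}_{\vec{d}}^{\text{Bell+}}$ is pathwise-connected -- being the continuous image of the pathwise-connected Bell-scenario set under the post-selection projection -- Fenchel's theorem will give $\textsf{CathNum}(\mathcal{Q}_{\vec{d}}^{\text{Bell+}}) \leq \textsf{AffineDimension}(\mathcal{Q}_{\vec{\infty}}^{\text{Bell+}})$, so multiplying each HSD in $\vec{d}$ by this factor delivers exactly the bounds stated in the corollary.

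The hard part will be resisting the temptation to derive the corollary by invoking Theorem~\ref{th2} on the maximal interruption Bell scenario and then projecting: that route would leave us with the strictly larger factor $\textsf{AffineDimension}(\mathcal{Q}_{\vec{\infty}}^{\text{MI}})$ instead of $\textsf{AffineDimension}(\mathcal{Q}_{\vec{\infty}}^{\text{Bell+}})$. To obtain the sharper Bell+ constant, the Carath\'eodory/Fenchel step must be carried out on the Bell+ set itself, which is why it is important to verify that pathwise-connectedness survives the projection and that the convexification ancilla at each party need only have dimension equal to the Bell+ Carath\'eodory number, independently of the ambient maximal interruption geometry.
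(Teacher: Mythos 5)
Your proposal is correct at the paper's level of rigor, and its first half is exactly the paper's intended argument: the remark that a convexly extremal Bell+ correlation must be the projection (under maximal interruption followed by the consistency/post-selection identification) of a convexly extremal correlation of the interrupted standard Bell scenario, where the first $N{-}1$ parties still have two settings and two outcomes, so Theorem~\ref{th1} supplies the $(2,\dots,2,2^{N-1})$ realization and the wiring costs no extra dimension. Where you genuinely diverge is the non-extremal step: the paper gives no explicit proof, stating only that the corollary follows by \enquote{translating} Theorems~\ref{th1} and \ref{th2} through the maximal-interruption projection; read that way, one invokes Theorem~\ref{th2} in the interrupted scenario and projects, which yields the multiplicative factor $\textsf{AffineDimension}$ of the \emph{interrupted} quantum set (explicitly computable from Eq.~(\ref{8})). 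You instead redo the Sec.~\ref{s2.d} Carath\'eodory/Fenchel convexification directly on the Bell+ set, using pathwise-connectedness of the projected set, which proves the statement with the affine dimension of the Bell+ set itself --- a smaller (or equal) constant that matches the literal reading of the corollary, at the price that this constant is no longer given by the formula of Eq.~(\ref{8}) but only upper bounded by it. Both routes inherit the same unproven-but-assumed ingredients as the paper (that the quantum set is the convex hull of its extremal points, and that extremality lifts through the projection), so your argument is a legitimate, slightly sharper filling-in of the paper's sketch rather than a flawed alternative.
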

Note that in these corollaries we are just translating the results from standard Bell scenarios to Bell+ scenarios. In addition, notice that we are giving upper bounds to the local HSD to achieve every point in the quantum set but some of the points can be achieved with smaller HSD, as, for example, all the convexly extremal points in the interrupted version that are non-extremal after the projection.
\section{Conclusion}\label{s4}
The question of determining what quantum resources, in terms of Hilbert space dimensions (HSDs), are required to reproduce given quantum correlations is pertinent to diverse applications in theoretical and applied quantum information. Here, we derive upper bounds on the local HSD required to reproduce convexly extremal quantum correlations in certain multipartite Bell scenarios, concretely those in which ${N{-}1}$ parties have two dichotomic measurements. Additionally, by leveraging the geometry of the quantum correlations set, we establish upper bounds for non-extremal correlations as well. These results are valuable for two distinct applications: the first one facilitates convex optimization over the quantum set of correlations and the second one provides a practical approach to addressing the membership problem, i.e., determining whether a given correlation belongs to the quantum set.

The vast majority of previous studies focused on the necessary local HSD for different versions of the standard Bell scenario, i.e., different cardinalities for settings and outputs. In this work, we point out that all such results in the standard Bell scenario can be translated into similar results in nonstandard Bell scenarios with communication. This generalization is based on the observation that the \emph{maximal interruption} of a nonstandard Bell scenario is itself a standard Bell scenario. As a straightforward consequence, we are able to establish upper bounds on the sufficient local HSD in far more general causal scenarios than those previously known.

To summarize, this article contributes to the ongoing effort to bound the Hilbert space dimensions required to reproduce quantum correlations. This problem holds particular importance for numerical studies, especially optimization tasks such as computing the maximum violation of Bell inequalities, which become significantly more tractable when finite-dimensional realizations are assumed. It should also be emphasized that the distinction between finite and infinite Hilbert space dimensions is not merely technical but reflects a qualitative difference with deep foundational implications. Despite some progress, the problem of finite-dimensional realization has only been addressed in fragments throughout the literature, leaving a substantial gap in our understanding and highlighting the need for future research.

\section{Acknowledgements}\label{s5}
Y.P. would like to acknowledge the support provided by Chong Wang and Timothy Hsieh, through their NSERC Discovery Grants, during the completion of this manuscript. Y.P., R.J.C., and L.M. also thank the Perimeter Scholars International (PSI) Master’s Program for facilitating this work. S.B. thanks the PSI Start Program. Research at Perimeter Institute is supported in part by the Government of Canada through the Department of Innovation, Science and Economic Development and by the Province of Ontario through the Ministry of Colleges and Universities.

L.M. would also like to include the following acknowledgement: This research was funded in whole or in part by the Austrian Science Fund (FWF) [10.55776/EFP6]. For open access purposes, the author has applied a CC BY public copyright license to any author-accepted manuscript version arising from this submission.

R.J.C. would also like to include the following acknowledgement: This work was supported by the Engineering  and Physical Sciences Research Council EP/Y035046/1.

\nocite{*}
\vspace{-0.01in}
\setlength{\bibsep}{.15\baselineskip plus .05\baselineskip minus .05\baselineskip}
\bibliographystyle{apsrev4-2-wolfe}
\nocite{apsrev42Control}
\bibliography{references}

\end{document}